\documentclass[10pt,conference]{IEEEtran}
\usepackage[cmex10]{amsmath}
\usepackage{amssymb}
\usepackage{amsfonts}
\usepackage{bbold}
\usepackage{wrapfig}
\usepackage[hidelinks]{hyperref}
\usepackage{graphicx}
\usepackage{enumerate}
\usepackage{xspace}

\newcommand{\qp}{\ti{QuickPQE}\xspace}
\newcommand{\cs}{\mbox{$\mi{CmplSet}$}\xspace}

\newtheorem{example}{Example}
\newtheorem{definition}{Definition}
\newtheorem{proposition}{Proposition}
\newenvironment{proof}{\hspace{8pt}\ti{Proof:}}{}

\newtheorem{remark}{Remark}

\newcommand{\bm}[1]{{\mbox{\boldmath $#1$}}}

\newcommand{\di}[1]{\mbox{$\mi{Diam}(#1)$}\xspace}
\newcommand{\rch}[1]{\mbox{$\mi{Rch}(#1)$}\xspace}

\newcommand{\Cls}[2]{\mbox{$#1^{\vec{#2}'}$}}
\newcommand{\imp}{\Rightarrow}

\newcommand{\pnt}[1]{{\mbox{$\vec{#1}$}}}
\newcommand{\Pnt}[1]{{\mbox{$\vec{#1}\,'$}}}
\newcommand{\ppnt}[2]{{\mbox{$\vec{#1}_{#2}$}}}

\newcommand{\V}[1]{\mbox{$\mathit{Vars}(#1)$}}

\newcommand{\s}[1]{\mbox{$\{#1\}$}}

\newcommand{\nGz}[2]{$G_{non-\{z\}}$}

\newcommand{\prr}[1]{\mi{Prev}(\boldsymbol{q})}

\newcommand{\mi}[1]{\mathit{#1}}
\newcommand{\ti}[1]{\textit{#1}}
\newcommand{\tb}[1]{\textbf{#1}}

\newcommand{\Tt}{\>\>}

\newcommand{\prob}[2]{\mbox{$\exists{#1} [#2]$}}
\newcommand{\Prob}[3]{\mbox{$\exists{#1}\exists{#2}[#3]$}}

\newcommand{\Comment}[1]{}

\newcommand{\aps}[1]{\mbox{$\mathbb{#1}$}}
\newcommand{\abs}[1]{\mbox{$\mathcal{#1}$}}
\newcommand{\Abs}[2]{\mbox{$\mathcal{#1}^{\mi{#2}}$}}

\begin{document}

\title{Generation Of A Complete Set Of Properties}

\author{\IEEEauthorblockN{Eugene Goldberg} 
\IEEEauthorblockA{
eu.goldberg@gmail.com}}

\maketitle

\begin{abstract}
One of the problems of formal verification is that it is not
functionally complete due the incompleteness of specifications.  An
implementation meeting an incomplete specification may still have a
lot of bugs. In testing, where achieving functional completeness is
infeasible it is replaced with \ti{structural} completeness. The
latter implies generation of a set of tests probing every piece of a
design implementation. We show that a similar approach can be used in
formal verification. The idea here is to generate a property of the
implementation at hand that is not implied by the specification.
Finding such a property means that the specification is not
complete. If this is an \ti{unwanted} property, the implementation is
buggy. Otherwise, a new specification property needs to be
added. Generation of implementation properties related to different
parts of the design followed by adding new specification properties
produces a \ti{structurally-complete specification}. Implementation
properties are built by \ti{partial quantifier elimination}, a
technique where only a part of the formula is taken out of the scope
of quantifiers. An implementation property is generated by applying
partial quantifier elimination to a formula defining the ``truth
table'' of the implementation. We show how our approach works on
specifications of combinational and sequential circuits.
\end{abstract}

\section{Introduction}
One of the problems of formal verification is that it is functionally
incomplete. Let us consider this problem by the example of a
combinational design. Suppose a set \abs{P} =
\s{P_1(X,Z),\dots,P_k(X,Z)} of formulas\footnote{In this paper, we consider only propositional formulas. We assume that
every formula is in conjunctive-normal form (CNF).  A \ti{clause} is a
disjunction of literals (where a literal of a Boolean variable $w$ is
either $w$ itself or its negation $\overline{w}$). So a CNF formula
$H$ is a conjunction of clauses: $C_1 \wedge \dots \wedge C_k$. We
also consider $H$ as the \ti{set of clauses} \s{C_1,\dots,C_k}.
}
specify properties of a combinational circuit to be designed. Here $X$
and $Z$ are sets of input and output variables of this circuit
respectively\footnote{For the sake of simplicity, in the introduction, we assume that
properties $P_i(X,Z)$ depend on \ti{all} input/output variables.  In
Section~\ref{sec:gen_iprops_comb}, we consider a more general case
where a property depends on a \ti{subset} of $X \cup Z$.
}. (A correct implementation
has to exclude the input/output behaviors \ti{falsifying} $P_i$,
$i\!=\!1,\dots,\!k$.)  Let $N(X,Y,Z)$ be a circuit implementing the
specification \abs{P} above where $Y$ is the set of internal
variables. Let $F(X,Y,Z)$ be a formula describing the functionality of
$N$. That is every consistent assignment to the variables of $N$
corresponds to a satisfying assignment of $F$ and vice versa (see
Section~\ref{sec:basic}). The circuit $N$ satisfies property $P_i$, $
1 \le i \le k$ iff \mbox{$F\!\imp\!P_i$}. The circuit $N$ meets the
specification \abs{P} iff $F \imp (P_1 \wedge \dots \wedge P_k)$.

Unfortunately, the fact that $N$ satisfies its specification does not
mean that the former is correct. (For instance, if \abs{P} consists
only of one property $P$ where $P \equiv 1$, \ti{any} circuit meets
\abs{P}.) One also needs to check if \abs{P} is \ti{complete}.  This
comes down to checking if $P_1 \wedge \dots \wedge P_k \imp
\prob{Y}{F}$.  Here \prob{Y}{F} specifies the truth table of $N$. If
this implication does not hold, some input/output behaviors of $N$ are
not defined by \abs{P} i.e. the latter is incomplete. Note that
checking the completeness of \abs{P} is inherently hard because it
requires some form of quantifier elimination (\tb{QE}) for
\prob{Y}{F}.

In testing, the incompleteness of functional verification is addressed
by using a set of tests that is complete \ti{structurally} rather than
functionally.  Structural completeness is achieved by probing every
piece of the design under test. In this paper, we show that a similar
approach can be applied to formal verification. This approach is based
on two ideas. The first idea is to check the completeness of the
specification \abs{P} by generating \ti{implementation} properties
i.e. those satisfied by $N$.  Let $Q(X,Z)$ be a property of $N$ (and
so $F \imp Q$).  If \mbox{$P_1 \wedge \dots \wedge P_k \not\imp Q$},
then the specification \abs{P} is \ti{incomplete}. If $Q$ is an
unwanted property, $N$ is buggy (and it should be modified so that it
does not satisfy $Q$).  Otherwise, a new property should be added to
the specification \abs{P} to make the latter imply $Q$. A trivial way
to achieve this goal is just to add to \abs{P} the property $Q$
itself.

The second idea is to generate implementation properties by a
technique called \tb{partial} QE
(\tb{PQE})~\cite{cert_tech_rep,hvc-14}. In terms of formula
\prob{Y}{F}, PQE takes a subset of clauses of $F$ out of the scope of
quantifiers.  (So QE is special case of PQE where the entire formula
is taken out of the scope of quantifiers.) This results in generation
of a formula $Q(X,Z)$ implied by $F$ i.e.  a property of
$N$. Importantly, by taking different subsets of clauses of $F$ out of
the scope of quantifiers, one builds a \ti{structurally complete set
  of properties}. By updating specification properties every time an
implementation property proves \abs{P} incomplete, one gets a
structurally complete specification. Importantly, by using clause
splitting and varying the size of the subformula taken out of the
scope of quantifiers one can control the complexity of PQE and hence
that of property generation. The latter ranges from essentially
\ti{linear} (for properties specifying the input/output behavior of
$N$ for a single test) to exponential.

Incompleteness of the specification \abs{P} may lead to two kinds of
bugs. A bug of the first kind that we mentioned above occurs when $N$
has an unwanted property. In this case, $N$ \ti{excludes} some
\ti{correct} input/output behaviors. (An example of an unwanted
property is given in Appendix~\ref{app:unwanted}.) A bug of the second
kind occurs when $N$ \ti{allows} some \ti{incorrect} input/output
behaviors. This type of bugs can be exposed by generating properties
that are \ti{inconsistent} with $N$. (As opposed to the implementation
properties that are \ti{consistent} with $N$ by definition.) Such
inconsistent properties are meant to imitate the missing properties of
\abs{P} that are not satisfied by $N$ (if any). Tests falsifying
inconsistent properties may expose incorrect input/output behaviors
allowed by $N$.  These properties can also be generated by
PQE. Besides, one can follow the same idea of structural completeness
by building a set of inconsistent properties relating to different
parts of $N$. However, this topic is beyond the scope of this
paper. (It is covered in~\cite{inc_props}.) So here, we consider
generation of a specification that is structurally complete only with
respect to \ti{consistent} properties of the implementation at hand.

The contribution of this paper is as follows. First, we describe
generation of implementation properties by PQE. Second, we show that
clause splitting allows to reduce the complexity of PQE (and hence the
complexity of property generation) to virtually linear. The latter
result also shows that PQE can be exponentially more efficient than
QE. Third, we sketch an algorithm for generation of a structurally
complete specification.

This paper is organized as follows. Basic definitions are given in
Section~\ref{sec:basic}. Section~\ref{sec:gen_iprops_comb} describes
generation of implementation properties of combinational circuits by
PQE. Generation of properties specifying the input/output behavior of
a single test is discussed in Section~\ref{sec:st_prop}.
Section~\ref{sec:cmpl_set} presents a procedure for making a
specification structurally complete. In Sections~\ref{sec:ext_to_seq}
and~\ref{sec:gen_iprops_seq} we extend our approach to sequential
circuits. Some concluding remarks are made in Section~\ref{sec:concl}.


\section{Basic Definitions}
\label{sec:basic}
%
%
\begin{definition}
Let $V$ be a set of variables. An \tb{assignment} \pnt{v} to $V$ is a
mapping $V'~\rightarrow \s{0,1}$ where $V' \subseteq V$.  We will
refer to \pnt{v} as a \tb{full assignment} to $V$ if $V' = V$. 
\end{definition}

From now on, by saying ``an assignment to a set of variables'' we mean
a \ti{full} assignment, unless otherwise stated.
%
%
\begin{definition}
  \label{def:vars}
Let $F$ be a formula. \bm{\V{F}} denotes the set of variables of $F$.
\end{definition}

%
%
\begin{definition}
  \label{def:qe_prob}
Let $H(W,V)$ be a formula where $W,V$ are disjoint sets of Boolean
variables.  The \tb{Quantifier Elimination (QE)} problem specified by
\prob{W}{H} is to find a formula $H^*(V)$ such that \bm{H^* \equiv
  \prob{W}{H}}.
\end{definition}

%
%
\begin{definition}
  \label{def:pqe_prob}
Let $H_1(W,V)$, $H_2(W,V)$ be Boolean formulas where $W,V$ are sets of
Boolean variables.  The \tb{Partial QE} (\tb{PQE}) problem is to find
a formula $H^*_1(V)$ such that \bm{\prob{W}{H_1 \wedge H_2} \equiv
  H^*_1 \wedge \prob{W}{H_2}}.  We will say that $H^*_1$ is obtained
by taking $H_1$ out of the scope of quantifiers in \prob{W}{H_1 \wedge
  H_2} Formula $H^*_1$ is called a \tb{solution} to PQE.
\end{definition}
%
%
\begin{remark}
  \label{rem:noise}
Note that if $H^*_1$ is a solution to the PQE problem above and a
clause $C \in H^*_1$ is implied by $H_2$ \ti{alone}, then $H^*_1
\setminus \s{C}$ is a solution too. If all clauses of $H^*_1$ are
implied by $H_2$, an empty set of clauses is a solution too. In this
case, $H^*_1 \equiv 1$ and $H_1$ is redundant in \prob{V}{H_1 \wedge
  H_2}.
\end{remark}

Let $N(X,Y,Z)$ be a combinational circuit where $X,Y,Z$ are sets of
input, internal and output variables respectively. We will say that a
formula $F(X,Y,Z)$ \tb{defines} $N$ if every consistent assignment to
the variables of $N$ corresponds to a satisfying assignment of $F$ and
vice versa~\cite{tseitin}. Let $N$ consist of gates
$g_1,\dots,g_k$. The formula $F$ can be built as $G_1 \wedge \dots
\wedge G_k$ where $G_i, 1 \leq i \leq k$ is a formula defining gate
$g_i$. Formula $G_i$ is constructed as a conjunction of clauses
falsified by the incorrect combinations of values assigned to
$G_i$. Then every assignment satisfying $G_i$ corresponds to a
consistent assignment of values to $g_i$ and vice versa.

\begin{example}
\label{exmp:circ_form}
Let $g$ be a 2-input AND gate specified by $v_3 = v_1 \wedge
v_2$. Then a formula $G$ defining $g$ is constructed as \mbox{$C_1
  \wedge C_2 \wedge C_3$} where $C_1 = v_1 \vee \overline{v}_3$, $C_2
= v_2 \vee \overline{v}_3$, $C_3 = \overline{v}_1 \vee \overline{v}_2
\vee v_3$. Here, the clause $C_1$, for instance, is falsified by the
assignment ($v_1=0,v_3=1$) that is inconsistent with the truth table
of $g$.
\end{example}

\section{Generation Of Implementation Properties}
\label{sec:gen_iprops_comb}
Let $N(X,Y,Z)$ be a combinational circuit where $X,Y,Z$ are sets of
input, internal and output variables respectively.  Let $F(X,Y,Z)$ be
a formula defining $N$.  Let $H$ be a non-empty subset of clauses of
$F$.  Consider the PQE problem of taking $H$ out of the scope of
quantifiers in \prob{W}{F} where $Y \subseteq W \subset \V{F}$. Let
formula $Q(V)$ be a solution to this problem i.e. $\prob{W}{F} \equiv
Q \wedge \prob{W}{F \setminus H}$. (Here $V$ denotes $\V{F} \setminus
W$ and so $V \subseteq (X \cup Z)$.) Since $Q$ is implied by $F$, it
is a \tb{property} of the circuit $N$. Note that by taking different
subsets of $F$ out of the scope of quantifiers in \prob{W}{F} one gets
different properties.

Intuitively, the smaller $H$, the easier taking $H$ out of the scope
of quantifiers.  So, the simplest case of the PQE problem above is
when a single clause of $F$ is taken out of the scope of
quantifiers. However, the complexity of PQE can be reduced much more
by using clause splitting to transform $F$.
%
%
\begin{definition}
\label{def:cls_split}
Let $R = \s{v_1,\dots,v_m}$ be a subset of \V{F}. Let
$l(v_1),\dots,l(v_m)$ be a set of literals.  Let $C$ be a clause of
$F$ such that $R \cap \V{C} = \emptyset$. The \tb{splitting} of $C$ on
variables of $R$ is to replace $C$ with clauses $C \vee
l(v_1)$,$\dots$, $C \vee l(v_m)$, $C \vee \overline{l(}v_1) \vee \dots
\vee \overline{l(}v_m)$.
\end{definition}

The idea here is to take the clause $C \vee \overline{l(}v_1) \vee
\dots \vee \overline{l(}v_m)$ out of the scope of quantifiers
\ti{instead} of $C$.  In the next section, we show that such
replacement can reduce the complexity of PQE to essentially
\tb{linear}. This also proves that PQE can be \tb{exponentially
  simpler} than QE.

%
\section{Generation Of Single-Test Properties}
\label{sec:st_prop}
In this section, we use clause splitting to make the following two
points. First, by using clause splitting and PQE one can generate very
weak properties e.g. properties specifying the input-output behavior
of a circuit for a single test.  Second, by using clause splitting one
can reduce the complexity of PQE and, hence, that of property
generation. (Of course, this complexity reduction is achieved at the
expense of the property strength.) In particular, for the single-test
properties mentioned above, this complexity reduces to essentially
\ti{linear}.
%
%
\subsection{A single-test property}
In this section, we continue using the notation of the previous
section.  In particular, we assume that a formula $F(X,Y,Z)$ defines a
combinational circuit $N(X,Y,Z)$ where $X,Y,Z$ are sets of input,
internal and output variables respectively.
%
%
\begin{definition}
\label{def:st_prop}
Let \Pnt{x} be a \tb{test} (i.e. an assignment to $X$).  Let \Pnt{z}
be the output assignment produced for \Pnt{x} by $N$. We will call
formula $Q(X,Z)$ a \tb{single-test property} of $N$ if
\begin{enumerate}
\item $Q(\pnt{x},\pnt{z}) = 1$ for every \pnt{x} different from
  \Pnt{x} regardless of the value of \pnt{z};
\item $Q(\Pnt{x},\Pnt{z}) = 1$;
\item $Q(\Pnt{x},\pnt{z}) = 0$ for at least one \pnt{z} different from
  \Pnt{z}.
\end{enumerate}
\end{definition}

Informally, $Q$ is a single-test property of $N$ if it (partially)
specifies the input/output behavior of $N$ for a single test
\Pnt{x}. Namely, $Q$ excludes (some) output assignments that are
\ti{not} produced for \Pnt{x} by $N$. In
Subsection~\ref{ssec:quick_pqe}, we describe a procedure called
\ti{QuickPQE} that generates a single-test property.

%
%
\subsection{The PQE problem we consider in this section}
\label{ssec:pqe_prob}
For the sake of simplicity, in our exposition, we use a particular
clause of the formula describing an AND gate of $N$. (However, we
explain how to extend this exposition to an arbitrary clause of the
formula describing an arbitrary gate of $N$).  Let $g$ be an AND gate
of a circuit $N$ whose functionality is described by $v_3 =
v_1\!\wedge\!v_2$ (see Example~\ref{exmp:circ_form}). Let clause $C
\in F$ be equal to $\overline{v}_1\!\vee\!\overline{v_2}\!\vee v_3$.
This clause forces assigning the output variable $v_3$ of $g$ to 1
when the input variables $v_1$ and $v_2$ of $g$ are assigned 1. (In
the general case, $C \in F$ is one of the clauses specifying a gate
$g$ of $N$. The clause $C$ has one variable specifying the output of
$g$. The remaining variables of $C$ correspond to the input variables
of $g$.)

Assume for the sake of simplicity that $\V{C} \cap X = \emptyset$.
Consider splitting $C$ on the variables of $X = \s{x_1,\dots,x_m}$.
That is $C$ is replaced in $F$ with $m+1$ clauses $C \vee
l(x_1)$,$\dots$, $C \vee l(x_m)$, $C \vee \overline{l(}x_1) \vee \dots
\vee \overline{l(}x_m)$. Denote the last clause as $C'$ $($i.e.  $C' =
C \vee \overline{l(}x_1) \vee \dots \vee \overline{l(}x_m))$. Let $F'$
denote $F \setminus \s{C'}$.  The PQE problem we solve in the next
subsection is to take $C'$ out of the scope of quantifiers in
\prob{Y}{C' \wedge F'}.

%
%
\subsection{QuickPQE procedure}
\label{ssec:quick_pqe}
Now we present a procedure called \qp that takes $C'$ out of the scope
of quantifiers in \prob{Y}{C' \wedge F'}. Since \qp solves only a
particular subset of instances of the PQE problem, it is
incomplete. Our intention here is just to show that this subset of
instances can be solved efficiently. One can easily incorporate \qp
into a complete PQE algorithm. This simply requires adding a procedure
for checking if the current instance of PQE satisfies the definition
of Subsection~\ref{ssec:pqe_prob} and if so, calling \qp.

Let \Pnt{x} denote the assignment to $X$ falsifying the literals
$\overline{l(}x_1),\dots, \overline{l(}x_m)$ of $C'$. \ti{QuickPQE}
starts with applying \Pnt{x} to $N$. Let $\pnt{z}\,'$ be the output
assignment produced by $N$ for $\pnt{x}\,'$. Suppose that $v_1$ and/or
$v_2$ are assigned 0 when computing $\pnt{z}\,'$.  (In the general
case, this means that the clause $C$ and hence the clause $C'$ is
satisfied by an assignment to an \ti{input} variable of the gate $g$.)
Then \ti{QuickPQE} declares $C'$ redundant claiming that $\prob{Y}{C'
  \wedge F'} \equiv \prob{Y}{F'}$.

If both $v_1$ and $v_2$ are assigned 1, then \ti{QuickPQE} performs
one more run. (In the general case, this means that the literals of
$C$ corresponding to the input variables of the gate $g$ are
falsified.) In this run, \ti{QuickPQE} also applies input $\pnt{x}\,'$
but modifies the operation of the gate $g$. Namely, $g$ produces the
output value 0 (instead of the value 1 implied by assignment $v_1=1,
v_2=1$). Note that in the second run, the clause $C'$ is falsified.
One can view the second run as applied to the circuit $N$ whose
functionality is modified by removing the clause $C'$. If the second
run produces the same output assignment $\pnt{z}\,'$, then
\ti{QuickPQE} again declares $C'$ redundant.  Now, suppose that $N$
outputs an assignment \pnt{z^*} different from $\pnt{z}\,'$. Then
\ti{QuickPQE} produces the solution $Q(X,Z)$ consisting of clauses
$\Cls{B}{x} \vee l(z_1)$,\dots,$\Cls{B}{x} \vee l(z_p)$ where
\begin{itemize}
\item \Cls{B}{x}= $\overline{l(}x_1) \vee \dots \vee
  \overline{l(}x_m)$ (i.e.  \Cls{B}{x} is the longest clause falsified
  by $\pnt{x}\,'$);
\item $z_1,\dots,z_p$ are the output variables of $N$ assigned
  \ti{differently} in $\pnt{z}\,'$ and \pnt{z^*};
\item $l(z_1),\dots,\l(z_p)$ are literals satisfied by $\pnt{z}\,'$
  (and falsified by \pnt{z^*}).
\end{itemize}
%
%
\begin{proposition}
  \label{prop:lin_time}
Let $\mi{Nlits}(F)$ denote the number of literals of $F$.  Let
\ti{QuickPQE} be applied to the PQE problem of taking $C'$ out of the
scope of quantifiers in \prob{Y}{C' \wedge F'} (described in
Subsection~\ref{ssec:pqe_prob}). Then \ti{QuickPQE} produces a correct
result and the complexity of \ti{QuickPQE} is $\abs{O}(\mi{Nlits}(F) +
|X|*|Z|)$.
\end{proposition}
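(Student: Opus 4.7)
My plan is to prove the two claims of the proposition — correctness of the returned formula $Q$ and the $O(\mi{Nlits}(F)+|X|\cdot|Z|)$ running time — essentially independently, using the observation that everything interesting happens at the single input $\pnt{x}\,'$. By construction, $\pnt{x}\,'$ is the unique assignment to $X$ that falsifies every literal $\overline{l(}x_i)$ in $C'$; in particular, for any input $\pnt{x}\neq \pnt{x}\,'$ at least one $\overline{l(}x_j)$ is true, so $C'$ is identically satisfied, and the long prefix $\Cls{B}{x}$ appearing in every clause of the candidate $Q$ is identically satisfied. Hence the desired PQE equivalence $\prob{Y}{C'\wedge F'}\equiv Q\wedge \prob{Y}{F'}$ collapses at every $\pnt{x}\neq\pnt{x}\,'$ to the tautology $\prob{Y}{F'}\equiv \prob{Y}{F'}$, regardless of whether \qp returns $Q\equiv 1$ or the longer clause set.

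At $\pnt{x}=\pnt{x}\,'$ I would analyse the satisfying assignments of $F'$ directly. Since every $l(x_i)$ is satisfied at $\pnt{x}\,'$, each of the $m$ splitting clauses $C\vee l(x_i)$ is trivially satisfied, so the only clauses of $F'$ constraining the gate $g$ are $v_1\vee \overline{v}_3$ and $v_2\vee \overline{v}_3$. Since every other gate of $N$ remains fully specified by $F'$, the values $v_1,v_2$ produced by $\pnt{x}\,'$ are uniquely determined. In Subcase~1 (one of $v_1,v_2$ is $0$) these two residual $g$-clauses force $v_3=0$, so $F'$ has a unique satisfying assignment at $\pnt{x}\,'$, and in that assignment some $\overline{v}_i$ is true, whence $C$ and therefore $C'\supseteq C$ is satisfied; hence $C'$ adds no constraint and declaring it redundant (i.e.\ returning $Q\equiv 1$) is correct. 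In Subcase~2 ($v_1=v_2=1$) both values of $v_3$ satisfy the residual $g$-clauses and each choice propagates deterministically through the rest of the circuit, yielding exactly two satisfying assignments of $F'$ at $\pnt{x}\,'$: the ordinary one $(\pnt{y}\,',\pnt{z}\,')$, which satisfies $C'$ because $v_3=1$, and the modified one with output \pnt{z^*}, which falsifies $C'$. A correct $Q$ must therefore accept $\pnt{z}\,'$ and reject \pnt{z^*} at $\pnt{x}\,'$; each emitted clause $\Cls{B}{x}\vee l(z_j)$ reduces at $\pnt{x}\,'$ to the literal $l(z_j)$, which by the construction of \qp is satisfied by $\pnt{z}\,'$ and falsified by \pnt{z^*}. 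Under the degenerate possibility $\pnt{z}\,'=$ \pnt{z^*} the emitted clause set is empty, $Q\equiv 1$, and this again agrees with $C'$ being redundant.

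For the complexity bound, each of the at most two circuit simulations visits every gate a constant number of times and therefore runs in $O(\mi{Nlits}(F))$ time; comparing $\pnt{z}\,'$ with \pnt{z^*} costs $O(|Z|)$; and emitting $Q$ writes at most $|Z|$ clauses of length $|X|+1$, contributing $O(|X|\cdot|Z|)$. Summing yields the claimed $O(\mi{Nlits}(F)+|X|\cdot|Z|)$ bound. The main obstacle, and the step I would write most carefully, is Subcase~2 of the correctness analysis: one must show that the two simulations enumerate \emph{all} satisfying assignments of $F'$ at $\pnt{x}\,'$, and no spurious ones. This rests on two easily-overlooked facts — that the splitting clauses $C\vee l(x_i)$ are inactive at $\pnt{x}\,'$, so they cannot secretly prune the $v_3=0$ completion, and that removing only $C'$ from $F$ leaves every gate other than $g$ fully specified — after which the required behaviour of $Q$ follows by direct evaluation of the clauses $\Cls{B}{x}\vee l(z_j)$.
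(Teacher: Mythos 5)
Your proof is correct, but it organizes the correctness argument differently from the paper. The paper cases on the three possible outcomes of \ti{QuickPQE} (input literal of $g$ satisfies $C$; both runs agree; runs disagree) and, within each, argues by deriving clauses implied by $F'$ --- e.g., in its Case~2b it shows $F'$ implies both $\Cls{B}{x} \vee l(z_i) \vee w$ and $\Cls{B}{x} \vee l(z_i) \vee \overline{w}$ and takes their resolvent on $w$. You instead prove a single structural lemma: in the subspace $\pnt{x}=\pnt{x}\,'$ the formula $F'$ has exactly one model (when a gate input satisfies $C$) or exactly two models, projecting onto outputs $\pnt{z}\,'$ and \pnt{z^*} (when the gate inputs falsify $C$), and then all of the paper's sub-cases reduce to directly evaluating $C'$ and the clauses $\Cls{B}{x}\vee l(z_j)$ on those models; the $\pnt{x}\neq\pnt{x}\,'$ subspaces are dismissed uniformly because $\Cls{B}{x}$ is satisfied there. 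This buys a unified treatment of the paper's Cases 1--3 (including the degenerate $\pnt{z}\,'=\pnt{z^*}$ case) at the price of having to justify carefully --- as you correctly flag --- that the two simulation runs enumerate \emph{all} models of $F'$ at $\pnt{x}\,'$, i.e., that the splitting clauses $C\vee l(x_i)$ are inactive there and that every gate other than $g$ remains fully constrained, so values propagate deterministically in topological order. The paper's resolution-style phrasing encodes the same facts clause by clause and transfers a bit more mechanically to the stated generalization where $C$ belongs to an arbitrary gate. Your complexity analysis coincides with the paper's. No gap.
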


The proofs of propositions are given in Appendix~\ref{app:proof}.

%
%
\begin{proposition}
  \label{prop:st_prop}
Let $C'$ be non-redundant in \prob{Y}{C' \wedge F'}. Then the formula
$Q(X,Z)$ generated by \ti{QuickPQE} is a single-test property of $N$.
\end{proposition}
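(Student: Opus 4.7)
The plan is to verify the three clauses of Definition~\ref{def:st_prop} directly from the syntactic form of $Q$ produced by \qp, invoking the non-redundancy hypothesis only for the third condition.

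Recall that \qp sets $\Pnt{x} = \pnt{x}\,'$ (the unique assignment falsifying all of $\overline{l(}x_1),\dots,\overline{l(}x_m)$) and $\Pnt{z} = \pnt{z}\,'$ (the output \emph{of $N$} on $\pnt{x}\,'$). The produced $Q$ is the set of clauses $\Cls{B}{x} \vee l(z_j)$ for $j=1,\dots,p$. So for condition~(1), I would observe that any $\pnt{x} \neq \pnt{x}\,'$ differs from $\pnt{x}\,'$ in some $x_i$, hence satisfies $\overline{l(}x_i)$, hence satisfies $\Cls{B}{x}$, hence satisfies every clause of $Q$ regardless of $\pnt{z}$. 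For condition~(2), under $\pnt{x} = \pnt{x}\,'$ the literal $\Cls{B}{x}$ is falsified, but by construction each $l(z_j)$ is satisfied by $\pnt{z}\,' = \Pnt{z}$, so each clause of $Q$ is satisfied by its second disjunct.

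The main step is condition~(3), and it is the only place where the non-redundancy hypothesis is needed. I need to show that $p \geq 1$ and then exhibit a $\pnt{z}$ falsifying $Q$. I would argue by cases on which branch of \qp fires. If \qp took either of the two ``declare redundant'' branches (inputs of $g$ not both satisfying the falsified literals of $C$, or second run reproducing $\pnt{z}\,'$), then by the correctness part of Proposition~\ref{prop:lin_time} we would have $\prob{Y}{C'\wedge F'} \equiv \prob{Y}{F'}$, i.e.\ $C'$ would be redundant, contradicting the hypothesis. Hence \qp must have reached the final branch in which the modified second run produces some $\pnt{z^*} \neq \pnt{z}\,'$, and then $p \geq 1$. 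Taking $\pnt{z} = \pnt{z^*}$, every $l(z_j)$ is falsified by construction, and $\Cls{B}{x}$ is falsified by $\pnt{x}\,'$, so every clause of $Q$ is falsified and $Q(\Pnt{x},\pnt{z^*}) = 0$, with $\pnt{z^*} \neq \Pnt{z}$ as required.

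The only real obstacle is the logical link between ``the non-redundancy of $C'$'' (a semantic statement about quantified formulas) and ``\qp reaches its last branch'' (an algorithmic statement). The cleanest way to bridge it is the contrapositive route sketched above: any earlier branch of \qp has already been shown to produce a \emph{correct} redundancy declaration by Proposition~\ref{prop:lin_time}, so non-redundancy of $C'$ rules those branches out. Everything else is routine syntactic evaluation of the clauses of $Q$ on the three classes of $(X,Z)$-assignments identified in Definition~\ref{def:st_prop}.
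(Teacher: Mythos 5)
Your proof is correct and follows essentially the same route as the paper's: verify the three conditions of Definition~\ref{def:st_prop} by direct syntactic evaluation of the clauses $\Cls{B}{x} \vee l(z_j)$. The one place you go beyond the paper is in making explicit why the non-redundancy hypothesis forces \qp into its final branch (so that $Q$ is nonempty and $\pnt{z^*} \neq \Pnt{z}$ exists) via the correctness claim of Proposition~\ref{prop:lin_time}; the paper leaves this step implicit, so your version is, if anything, the more complete argument.
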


\section{Producing A Structurally Complete Set Of Properties}
\label{sec:cmpl_set}

In this section, we give an example of a procedure called \cs that
generates a structurally complete specification. The pseudocode of \cs
is shown in Figure~\ref{fig:cmpl_set}. \cs accepts
\begin{itemize}
\item a specification \abs{P} (i.e. a set of properties $P_1,\!\dots,\!P_k$)
\item an ``informal'' specification \Abs{P}{inf} that is used to tell
  if a property of $N$ is unwanted
\item an implementation $F(X,Y,Z)$ defining a circuit $N$
\item the set of variables $V \subseteq (X \cup Z)$ on which
  implementation properties will depend on.
\end{itemize}
\cs returns an unwanted property of $N$ exposing a bug (if any) or a
structurally complete specification \abs{P}. The existence of an
informal specification \Abs{P}{inf} is based on the assumption that,
for every input, the designer is able to tell if the output produced
by $N$ is incorrect.

%
%
\setlength{\intextsep}{4pt}
\setlength{\textfloatsep}{10pt}
\begin{figure}[h]
\centering
\small
\parbox{0cm}{\begin{tabbing}
aaa\=bb\=cc\= dd\= \kill
$\cs(\abs{P},\Abs{P}{inf},F,V)$\{\\
\tb{\scriptsize{1}}\> $\mi{Cls}:=F$    \\
\tb{\scriptsize{2}}\> while $(\mi{Cls} \neq \emptyset)$ \{ \\
\tb{\scriptsize{3}}\Tt  $C := \mi{PickCls}(Cls)$ \\
\tb{\scriptsize{4}}\Tt  $\mi{Cls} := \mi{Cls} \setminus \s{C}$ \\
\tb{\scriptsize{5}}\Tt  $Q := \mi{PQE}(F,C,V)$ \\
\tb{\scriptsize{6}}\Tt  $\mi{Clean}(Q,F,C)$ \\
\tb{\scriptsize{7}}\Tt  if $(\mi{Impl}(\abs{P},Q))$ continue \\
\tb{\scriptsize{8}}\Tt  if $(\mi{Unwanted}(\Abs{P}{inf},Q))$ return($Q$,$\mi{nil}$) \\
\tb{\scriptsize{9}}\Tt  $P := \mi{SpecProp}(\abs{P},\Abs{P}{inf},F,Q)$ \\
\tb{\scriptsize{10}}\Tt  $\abs{P} := \abs{P} \cup \s{P}$\} \\
\tb{\scriptsize{11}}\> return($\mi{nil},\abs{P}$)\}  \\
\end{tabbing}}
\vspace{-10pt}
\caption{The \cs procedure}
\vspace{3pt}
\label{fig:cmpl_set}
\end{figure}

\cs starts with initializing a copy \ti{Cls} of formula $F$ (line
1). Then \cs runs a 'while' loop until \ti{Cls} is empty. \cs starts
an iteration of the loop by extracting a clause $C$ from \ti{Cls}
(lines 3-4). Then it builds an implementation property $Q(V)$ as a
solution to the PQE problem \prob{W}{C \wedge F'} where $F' = F
\setminus \s{C}$ and $W = \V{F} \setminus V$ (line 5). That is
$\prob{W}{C \wedge F'} \equiv Q \wedge \prob{W}{F'}$.  One can view
$Q$ as a property ``probing'' the part of $N$ represented by $C$. Then
\cs calls the procedure called \ti{Clean} (line 6) to remove the
clauses implied by $F'$ from $Q$ (see Remark~\ref{rem:noise}). At this
point, $Q$ consists only of clauses whose derivation involved the
clause $C$.

Then \cs checks if $P_1 \wedge \dots \wedge P_k \imp Q$ (line 7).  If
so, then a new iteration starts. Otherwise, the current specification
\abs{P} is incomplete, which requires modification of $N$ or
\abs{P}. If $Q$ is an unwanted property, \cs returns it as a proof
that $N$ is buggy (line 8). In this case, $Q$ excludes some correct
input/output behaviors. To decide if this is the case, the informal
specification \Abs{P}{inf} mentioned above is applied.  If $Q$ is a
desired property, \cs generates a new \ti{specification} property $P$
such that $P_1 \wedge \dots \wedge P_k \wedge P \imp Q$ and adds it to
\abs{P} (lines 9-10).  A trivial way to update \abs{P} is just to use
$Q$ as a new specification property (i.e. $P = Q$).  If \cs terminates
the loop without finding a bug, it returns \abs{P} as a structurally
complete specification.

\section{Extending Idea To Sequential Circuits}
\label{sec:ext_to_seq}
In this section and Section~\ref{sec:gen_iprops_seq}, we extend our
approach to sequential circuits. Subsections~\ref{ssec:seq_defs}
and~\ref{ssec:stutt} provide some definitions.
Subsection~\ref{ssec:hi_lvl} gives a high-level view of building a
structurally complete specification for a sequential circuit (in terms
of safety properties).
%
%
\subsection{Some definitions}
\label{ssec:seq_defs}
Let $M(S,X,Y,S')$ be a sequential circuit. Here $X,Y$ denote input and
internal combinational variables respectively and $S,S'$ denote the
present and next state variables respectively. (For the sake of
simplicity, we assume that $M$ does not have any combinational output
variables.) Let $F(S,X,Y,S')$ be a formula describing the circuit
$M$. The formula $F$ is built for $M$ in the same manner as for a
combinational circuit (see Section~\ref{sec:basic}).  Let $I(S)$ be a
formula specifying the \tb{initial states} of $M$.  Let $T(S,S')$
denote \Prob{X}{Y}{F} i.e. the \tb{transition relation} of $M$.

A \tb{state} \pnt{s} is an assignment to $S$. Any formula $P(S)$ is
called a \ti{safety property} for $M$. A state \pnt{s} is called a
$P$-\ti{state} if $P(\pnt{s})=1$. A state \pnt{s} is called
\ti{reachable} \ti{in} $n$ \ti{transitions} (or in $n$\ti{-th}
\ti{time frame}) if there is a sequence of states
\ppnt{s}{1},\dots,\ppnt{s}{n+1} such that \ppnt{s}{1} is an $I$-state,
$T(\ppnt{s}{i},\ppnt{s}{i+1})=1$ for $i=1,\dots,n$ and
\ppnt{s}{n+1}=\pnt{s}.

We will denote the \ti{reachability diameter} of $M$ with initial
states $I$ as \bm{\di{M,I}}. That is if $n=\di{M,I}$, every state of
$M$ is reachable from $I$-states in at most $n$ transitions. We will
denote as \bm{\rch{M,I,n}} a formula specifying the set of states of
$M$ reachable from $I$-states in $n$ transitions. We will denote as
\bm{\rch{M,I}} a formula specifying all states of $M$ reachable from
$I$-states.  A property $P$ \ti{holds} for $M$ with initial states
$I$, if no $\overline{P}$-state is reachable from an $I$-state.

%
%

\subsection{Stuttering}
\label{ssec:stutt}

In the following explanation, for the sake of simplicity, we assume
that the circuit $M$ above has the \tb{stuttering} feature. This means
that $T(\pnt{s},\pnt{s})$=1 for every state \pnt{s} and so $M$ can
stay in any given state arbitrarily long.  If $M$ does not have this
feature, one can introduce stuttering by adding a combinational input
variable $v$.  The modified circuit works as before if $v=1$ and
remains in its current state if $v=0$.

On one hand, introduction of stuttering does not affect the
reachability of states of $M$. On the other hand, stuttering
guarantees that the transition relation of $M$ has two nice
properties.  First, $\prob{S}{T(S,S')} \equiv 1$, since for every next
state \Pnt{s}, there is a ``stuttering transition'' from \pnt{s} to
\Pnt{s} where \pnt{s} = \Pnt{s}.  Second, if a state is unreachable in
$M$ in $n$ transitions it is also unreachable in $i$ transitions if $i
< n$. Conversely, if a state is reachable in $M$ in $n$ transitions,
it is also reachable in $i$ transitions where $i > n$.
\begin{remark}
\label{rem:stutt}
Note that for a circuit $M$ with the stuttering feature, formula
\rch{M,I,n} specifies not only the states reachable in $n$ transitions
but also those reachable in \ti{at most} $n$ transitions.
\end{remark}

%
%
\subsection{High-level view}
\label{ssec:hi_lvl}
In this paper, we consider a specification of the sequential circuit
$M$ above in terms of safety properties. So, when we say a
specification property $P(S)$ of $M$ we \ti{mean a safety property}.
Let $F_{1,i}$ denote $F_1 \wedge \dots \wedge F_i$ where $F_j$, $1
\leq j \leq i$ is the formula $F$ in $j$-th time frame i.e. expressed
in terms of sets of variables $S_j,X_j,Y_j,S_{j+1}$.  Formula
\rch{M,I,n} can be computed by QE on formula \prob{W_{1,n}}{I_1 \wedge
  F_{1,n}}.  Here $I_1 = I(S_1)$ and $W_{1,n} = \V{F_{1,n}} \setminus
S_{n+1}$. If $n \ge \di{M,I}$, then \rch{M,I,n} is also \rch{M,I}
specifying all states of $M$ reachable from $I$-states.

Let $\abs{P} = \s{P_1,\dots,P_k}$ be a set of properties forming a
\ti{specification} of a sequential circuit with initial states defined
by $I$.  Let a sequential circuit $M$ be an implementation of the
specification \abs{P}.  So, every property $P_i$,$i=1,\dots,k$ holds
for $M$ and $I$. Verifying the completeness of \abs{P} reduces to
checking if $P_1 \wedge \dots \wedge P_k \imp \rch{M,I}$. Assume that
computing \rch{M,I} is hard. So, one does not know if the
specification \abs{P} is complete. Then one can use the approach
described in the previous sections to form a specification that is
complete \ti{structurally} rather than functionally.

We exploit here the same idea of using PQE to compute properties of
$M$ i.e.  \ti{implementation} properties. Let $Q$ be such a property.
If $P_1 \wedge \dots \wedge P_k \not\imp Q$, then the specification
\abs{P} is \ti{incomplete}. If some states falsifying $Q$ (and hence
unreachable from $I$-states) should be reachable, $M$ is buggy and
must be modified. Otherwise, one needs to update \abs{P} by adding a
specification property $P$ to guarantee that $P_1 \wedge \dots \wedge
P_k \wedge P \imp Q$. The simplest way to achieve this goal is just to
add $Q$ to \abs{P}. Using a procedure similar to that shown in
Fig.~\ref{fig:cmpl_set} one can construct a structurally complete
specification.

\section{Generation Of Safety Properties}
\label{sec:gen_iprops_seq}
In this section, we continue using the notation of the previous
section. Here, we discuss generation of properties for a sequential
circuit $M(S,X,Y,S')$, i.e. \ti{implementation} properties.
Subsection~\ref{ssec:rch_diam_on} considers the case where the
reachability diameter \di{M,I} is known.  (In~\cite{mc_no_inv}, we
showed that one can use PQE to find \di{M,I} without generation of all
reachable states.) Subsection~\ref{ssec:rch_diam_off} describes an
approach to generation of properties when \di{M,I} is not known.
%
%
\subsection{The case of known reachability diameter}
\label{ssec:rch_diam_on}
As we mentioned in Subsection~\ref{ssec:hi_lvl}, formula \rch{M,I} can
be obtained by QE on \prob{W_{1,n}}{I_1 \wedge F_{1,n}} where $n \ge
\di{M,I}$.  Here $I_1 = I(S_1)$, $F_{1,n} = F_1 \wedge \dots \wedge
F_n$, and $W_{1,n} = \V{F_{1,n}} \setminus S_{n+1}$.

Below we show how one can build a property of $M$ by PQE. Let $C$ be a
clause of $F_{1,n}$. Let $Q(S_{n+1})$ be a solution to the PQE problem
of taking $C$ out of the scope of quantifiers in \prob{W_{1,n}}{I_1
  \wedge C \wedge F'_{1,n}} where $F'_{1,n} = F_{1,n} \setminus
\s{C}$. That is \prob{W_{1,n}}{I_1 \wedge C \wedge F'_{1,n}} $\equiv$
$Q \wedge$ \prob{W_{1,n}}{I_1 \wedge F'_{1,n}}. Let us show that $Q$
is a \ti{property} of $M$. Let \pnt{s} be a state falsifying $Q$
i.e. \pnt{s} in unreachable from an $I$-state in $n$ transitions.  On
one hand, since $M$ has the stuttering feature, \pnt{s} cannot be
reached in $i$ transitions where $i \leq n$. On the other hand, since
$n \ge \di{M,I}$, \pnt{s} cannot be reached in $i$ transitions where
$i > n$. So all states falsifying $Q$ are unreachable and thus $Q$ is
a property of $M$. By taking different clauses of $F_{1,n}$ out of the
scope of quantifiers one can generate different implementation
properties. Following a procedure similar to that of
Fig.~\ref{fig:cmpl_set}, one can generate a specification of $M$ that
is structurally complete.

%
%
\subsection{The case of unknown reachability diameter}
\label{ssec:rch_diam_off}
Suppose that the reachability diameter of $M$ is unknown. Then one
needs to modify the procedure of the previous subsection as
follows. Let $Q(S_{n+1})$ be a solution to the PQE problem of taking
$C$ out of the scope of quantifiers in \prob{W_{1,n}}{I_1 \wedge C
  \wedge F'_{1,n}} where $F'_{1,n} = F_{1,n} \setminus \s{C}$. Assume
that $n < \di{M,I}$. Then $Q$ is not a property of $M$. One can only
guarantee that the states falsifying $Q$ cannot be reached in at most
$n$ transitions.

%
%
\setlength{\intextsep}{4pt}
\setlength{\textfloatsep}{10pt}
\begin{figure}[h]
\centering
\small
\parbox{0cm}{\begin{tabbing}
aa\=bb\=cc\= dd\= \kill
$\mi{MakeInv}(F,I,Q)$\{\\
\tb{\scriptsize{1}}\> while ($\mi{true}$) \}  \\
\tb{\scriptsize{2}}\Tt  $(\mi{Cex},Q) := \mi{MC}(F,I,Q)$\\
\tb{\scriptsize{3}}\Tt  if ($\mi{Cex} = \mi{nil}$) return($Q$) \\
\tb{\scriptsize{4}}\Tt  $Q := \mi{Relax}(Q,Cex)$\\
\tb{\scriptsize{5}}\Tt if ($Q \equiv 1$) return($Q$)\}\}\\
\end{tabbing}}
\vspace{-10pt}
\caption{The \ti{MakeInv} procedure}
\label{fig:make_inv}
\end{figure}

One can turn $Q$ into a property by using procedure \ti{MakeInv} shown
in Figure~\ref{fig:make_inv}. \ti{MakeInv} runs a 'while' loop.
First, \ti{MakeInv} calls a model checker \ti{MC}
(e.g. IC3~\cite{ic3}) to prove property $Q$. If \ti{MC} succeeds,
\ti{MakeInv} returns $Q$ as a property of $M$. Otherwise, \ti{MC}
finds a counterexample \ti{Cex}. This means that a state \pnt{s}
falsifying $Q$ (and thus unreachable in at most $n$ transitions) is
reachable in $i$ transitions where $i > n$. Then one needs to relax
$Q$ by replacing it with a property implied by $Q$ but not falsified
by \pnt{s}.

One way to relax $Q$ is to replace it with a solution $R$ to the PQE
problem of taking \prob{W}{Q(S) \wedge F(S,X,Y,S')} out of the scope
of quantifiers where $W = X \cup Y \cup S$. That is \prob{W}{Q \wedge
  F} $\equiv R\,\, \wedge$ \prob{W}{F}. Since the circuit $M$ has the
stuttering feature, $\prob{W}{F} \equiv 1$. So $R$ just specifies the
set of states reachable from $Q$-states in one transition. If \pnt{s}
still falsifies $R$, one can use PQE to find the set of states
reachable from $R$-states and so on. If \pnt{s} does not falsify $R$,
the latter is used as a new formula $Q$ (line 4). If relaxation ends
up with a trivial property, \ti{MakeInv} terminates (line
5). Otherwise a new iteration starts.

By taking different clauses of $F_{1,n}$ out of the scope of
quantifiers in \prob{W_{1,n}}{I_1 \wedge F_{1,n}} one can generate
different properties of the circuit $M$.

\section{Conclusions}
\label{sec:concl}
Incompleteness of a specification \ti{Spec} creates two problems.
First, an implementation \ti{Impl} of \ti{Spec} may have some
\ti{unwanted} properties that \ti{Spec} does not ban. Second,
\ti{Impl} may break some \ti{desired} properties that are not in
\ti{Spec}.  In either case, \ti{Spec} fails to expose bugs of
\ti{Impl}.  In testing, the problem of functional incompleteness is
addressed by running a test set that is complete \ti{structurally}
rather than functionally.  This structural completeness is achieved by
generating tests probing every piece of \ti{Impl}.  We apply this idea
to formal verification. Namely, we show that by using a technique
called partial quantifier elimination (PQE) one can generate
properties probing different parts of \ti{Impl}.  By checking that no
property of \ti{Impl} generated by PQE is unwanted one addresses the
first problem above. By updating \ti{Spec} to make it imply the
desired properties of \ti{Impl} generated by PQE one builds a
specification that is structurally complete. One can use a similar
approach to address the second problem above~\cite{inc_props}.

\bibliographystyle{plain}
\bibliography{short_sat,local}
\vspace{15pt}
\appendices
\section{Unwanted Property Derived By PQE}
\label{app:unwanted}
In this appendix, we give an example of an unwanted property derived
by PQE.  Consider the design of a combinational circuit called a
\ti{sorter}.  It accepts $r$-bit numbers ranging from 0 to
$2^r\!-\!1$, sorts them, and outputs the result.  Let $X$ and $Z$ be
sets of input and output variables of the sorter respectively. Assume
that the sorter accepts $m$ numbers. Let $\aps{x}_1,\dots,\aps{x}_m$
and $\aps{z}_1,\dots,\aps{z}_m$ be the numbers specified by input
\pnt{x} and output \pnt{z} respectively. The properties $P'(X)$ and
$P''(X,Z)$ below form a \ti{complete} specification of the sorter.
\begin{itemize}
\item $P'(\pnt{z}) = 1$ iff $\aps{z}_1 \le \dots \le \aps{z}_m$,
 \item $P''(\pnt{x},\pnt{z})\!=\!1$ iff
   $\aps{z}_1,\!\dots,\!\aps{z}_m$ is a permutation of
   $\aps{x}_1,\!\dots,\!\aps{x}_m$.
\end{itemize}
Let the designer use an \ti{incomplete} specification \abs{P}
consisting only of the property $P'$. Let $N(X,Y,Z)$ be an
\ti{implementation} of the sorter and $F(X,Y,Z)$ be a formula
describing the functionality of $N$. Assume $F \imp P'$ i.e. $N$
satisfies the specification \abs{P}.
Suppose $N$ is buggy. Namely, let $\aps{z}_1\!=\!0$ for every input
\pnt{x} of $N$.  (This does not contradict $F\!\imp\!P'$, since
$\aps{z}_i\!\ge\!0$, $1\!< i\!\le\!  m$.) Then $N$ has a property $Q$
falsified by the outputs \pnt{z} where $\aps{z}_1\!=\!b$ and $b$ is a
constant $1 \le b \le 2^r\!-\!1$.

Suppose $Q$ is obtained by taking $C \in F$ out of the scope of
quantifiers in \prob{Y}{F} i.e. by PQE.  On one hand, $P' \not\imp
Q$. Indeed, $P'$ is satisfied by an assignment \pnt{z} where
$\aps{z}_1,\dots,\aps{z}_m$ are sorted and $\aps{z}_1 = b$. So
derivation of $Q$ proves \abs{P} incomplete. On the other hand, $Q$ is
an \ti{unwanted} property of $N$. In a correct sorter, $\aps{z}_1$ can
take any value from 0 to $2^r\!-\!1$. So derivation of $Q$ exposes a
hole in \abs{P} and proves $N$ buggy.

\section{Proof Of Proposition~\ref{prop:lin_time}}
\label{app:proof}
\setcounter{proposition}{0}

%
%
\begin{proposition}
Let $\mi{Nlits}(F)$ denote the number of literals of $F$.  Let
\ti{QuickPQE} be applied to the PQE problem of taking $C'$ out of the
scope of quantifiers in \prob{Y}{C' \wedge F'} (described in
Subsection~\ref{ssec:pqe_prob}). Then \ti{QuickPQE} produces a correct
result and the complexity of \ti{QuickPQE} is $\abs{O}(\mi{Nlits}(F) +
|X|*|Z|)$.
\end{proposition}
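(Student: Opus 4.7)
The plan is to handle correctness and complexity separately. For correctness, I would establish the PQE identity $\prob{Y}{C' \wedge F'} \equiv Q \wedge \prob{Y}{F'}$, where $Q$ is the formula returned by \qp (with the ``redundant'' verdict interpreted as $Q \equiv 1$, cf.\ Remark~\ref{rem:noise}). The structural fact I rely on throughout is that the extra literals appended to $C$ inside $C'$ form the clause \Cls{B}{x}, which is falsified \emph{only} by the particular test \Pnt{x}. So $C'$ is trivially satisfied at every other input, and the two sides of the target equivalence already coincide off \Pnt{x}; the entire burden of the correctness proof is to verify the equivalence at $\vec{x} = \Pnt{x}$.

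At \Pnt{x}, I would argue that $F$ uniquely pins down internal values and outputs $\pnt{y}\,', \pnt{z}\,'$, while $F'$ may additionally admit one other satisfying assignment: the one obtained by flipping the output $v_3$ of gate $g$ from $1$ to $0$, which is possible only when the input literals of $C$ corresponding to $v_1,v_2$ are both falsified. In that event, propagating $v_3 = 0$ through the acyclic downstream part of the circuit yields a uniquely defined alternate output $\pnt{z^*}$. Then I would case-split on the three \qp outcomes:
\begin{itemize}
\item In the first run $v_1 = 0$ or $v_2 = 0$: the extra satisfying assignment of $F'$ does not exist, so the projections of $F$ and $F'$ at \Pnt{x} coincide and the redundant verdict is correct.
\item $v_1 = v_2 = 1$ in the first run but the second run reproduces $\pnt{z}\,'$: the two admissible assignments of $F'$ at \Pnt{x} project to the same $\pnt{z}\,'$, so the projections of $F$ and $F'$ still coincide and the redundant verdict is correct.
\item $v_1 = v_2 = 1$ and $\pnt{z^*} \neq \pnt{z}\,'$: the projection of $F'$ at \Pnt{x} is $\s{\pnt{z}\,', \pnt{z^*}}$ while that of $F$ is $\s{\pnt{z}\,'}$; the output clauses \Cls{B}{x}$\vee l(z_i)$ are trivially true off \Pnt{x}, and at \Pnt{x} they require agreement with $\pnt{z}\,'$ on the coordinates where $\pnt{z^*}$ disagrees, which excludes exactly $\pnt{z^*}$.
\end{itemize}
Combining the three cases yields the required PQE identity.

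For complexity, I would bound each stage of \qp: each of the two circuit simulations is a linear traversal of the gate-level representation and hence runs in $\abs{O}(\mi{Nlits}(F))$; the coordinate-wise comparison of $\pnt{z}\,'$ and $\pnt{z^*}$ costs $\abs{O}(|Z|)$; and emitting at most $|Z|$ clauses, each of length $|X|+1$, costs $\abs{O}(|X|\cdot|Z|)$. Summing gives the claimed bound.

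The step I expect to be the main obstacle is the third case of the correctness argument, where one must show rigorously that at \Pnt{x} the formula $F'$ admits \emph{exactly} the two output tuples $\pnt{z}\,'$ and $\pnt{z^*}$, no more and no less. This rests on the acyclicity of the combinational circuit together with the fact that the remaining clauses of gate $g$ already force $v_3 = 0$ whenever an input of $g$ is $0$, so the freedom gained by removing $C'$ is restricted to flipping $v_3$ in exactly the $v_1 = v_2 = 1$ sub-case; any other apparent assignment of $\vec{y}$ is ruled out by the gate clauses outside $g$.
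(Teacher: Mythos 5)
Your proposal is correct, and for the correctness half it takes a genuinely different route from the paper. Both arguments reduce the claim to equisatisfiability of $C'\wedge F'$ and $Q\wedge F'$ in every subspace $(\pnt{x},\pnt{z})$, and both dispose of $\pnt{x}\neq\Pnt{x}$ immediately because \Cls{B}{x} (and hence $C'$ and every clause of $Q$) is satisfied there. The difference is in how the critical subspace $\pnt{x}=\Pnt{x}$ is handled. The paper argues syntactically, clause by clause: in its Case~1b it derives a clause implying $C'$ by BCP/resolution over $F'$; in Case~2b it shows $F'$ implies both $\Cls{B}{x}\vee l(z_i)\vee w$ and $\Cls{B}{x}\vee l(z_i)\vee\overline{w}$ and resolves them on the gate-output variable $w$; and its Case~3 is split into four subcases over the output coordinates. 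You instead prove one semantic lemma: at \Pnt{x} the companion split clauses $C\vee l(x_j)$ are all satisfied, so $F'$ acts like $F$ with the gate clause $C$ deleted, and by acyclicity together with the remaining clauses of $g$ (which still force $v_3=0$ whenever an input of $g$ is $0$) the models of $F'$ at \Pnt{x} are exactly the nominal execution trace and, only when $v_1=v_2=1$, the trace with $v_3$ flipped to $0$. All three \qp outcomes are then read off from the projections $\s{\pnt{z}\,'}$ versus $\s{\pnt{z}\,',\pnt{z^*}}$, with $Q$ excluding precisely $\pnt{z^*}$ at \Pnt{x}. Your version is more compact and makes the role of the circuit structure explicit; the paper's version stays closer to a resolution-style certificate that a PQE solver would actually emit and never has to state the model-characterization lemma, which (as you rightly flag) is the one step requiring care --- the functional completeness of each gate's clause set and the topological order of the circuit are exactly what rule out any third model. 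The complexity accounting in your proposal is identical to the paper's: two linear-time simulations plus emitting at most $|Z|$ clauses of $|X|+1$ literals.
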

%
%
\begin{proof}
The complexity of \ti{QuickPQE} is linear in $\mi{Nlits}(F)$ because
the former performs two test runs, each run having linear complexity
in the number of literals of $F$. The term $|X|*|Z|$ is due to the
fact that the solution produced by \ti{QuickPQE} may consist of $|Z|$
clauses of $|X|+1$ literals.

Now, let us show that \ti{QuickPQE} produces a correct solution. Let
$w \in Y \cup Z$ denote the output variable of the gate $g$. Assume
for the sake of clarity that $C$ contains the positive literal of
$w$. So, in the second run of \ti{QuickPQE} (where $C'$ and $C$ are
falsified) the value of $w$ is set to 0.

Denote the solution produced by \ti{QuickPQE} as $Q(X,Z)$ and so
$\prob{Y}{C' \wedge F'} \equiv Q \wedge \prob{Y}{F'}$.  We prove this
equivalence by showing that $C' \wedge F'$ and $Q \wedge F'$ are
equisatisfiable for every assignment (\pnt{x},\pnt{z}) to $X \cup
Z$. (Recall that $X$ and $Z$ specify the input and output variables of
the circuit $N$.) Below, we consider the three possible cases.

\tb{Case 1.} \ti{In the first run of QuickPQE, an \tb{input} variable
  of the gate} $g$ \ti{is assigned the value satisfying the clause}
$C$ \ti{(and hence the clause} $C'$\ti{)}. In this case $Q \equiv
1$. So one needs to show that for every assignment (\pnt{x},\pnt{z})
to $X \cup Z$, formulas $C' \wedge F'$ and $F'$ are
equisatisfiable. Consider the following two sub-cases.
\begin{enumerate}[(a)]
\item $\pnt{x} \neq \Pnt{x}$. Then $C'$ is satisfied by \pnt{x}
 and so $C' \wedge F'$ and $F'$ are logically equivalent in subspace
 (\pnt{x},\pnt{z}).
\item \pnt{x} = \Pnt{x}. In this case, $C'$ is satisfied by an
  assignment to an input variable of the gate $g$. The latter is true
  because the execution trace of $N$ under input \pnt{x} can be
  obtained by Boolean Constraint Operation (BCP) in subspace \pnt{x}
  over formula $C' \wedge F'$. The definition of an execution trace
  entails that this BCP satisfies \ti{all} clauses of $F$. The fact
  that BCP leads to satisfying $C'$ means that a clause implying $C'$
  in subspace \pnt{x} can be derived by
  resolving\footnote{Let clauses $C'$,$C''$ have opposite literals of exactly one variable
$w \in \V{C'} \cap \V{C''}$.  Then clauses $C'$,$C''$ are called
\ti{resolvable} on~$w$.  The clause $C$ having all literals of
$C',C''$ but those of $w$ is called the \ti{resolvent} of $C'$,$C''$
on $w$. The clause $C$ is said to be obtained by \ti{resolution} on
$w$.

} clauses of $F'$.  This means
  that $C'$ is implied by formula $F'$ in subspace \pnt{x}. So $C'
  \wedge F'$ and $F'$ are logically equivalent in subspace
  (\pnt{x},\pnt{z}).
\end{enumerate}

\tb{Case 2.} \ti{In the first run of QuickPQE, all \tb{input}
  variables of the gate} $g$ \ti{are assigned values \tb{falsifying}}
$C$. \ti{In the second run of QuickPQE}, $N$ \ti{outputs \tb{the same}
  assignment} \Pnt{z} \ti{as in the first run}. Then, like in the
first case, $Q \equiv 1$. So one needs to show that for every
assignment (\pnt{x},\pnt{z}) to $X \cup Z$, formulas $C' \wedge F'$
and $F'$ are equisatisfiable. Consider the following three sub-cases.
\begin{enumerate}[(a)]
\item $\pnt{x} \neq \Pnt{x}$. Then $C'$ is satisfied by \pnt{x}
 and so $C' \wedge F'$ and $F'$ are logically equivalent in subspace
 (\pnt{x},\pnt{z}).
\item $\pnt{x} = \Pnt{x}$ and $\pnt{z} \neq \Pnt{z}$. Let us show that
  in this case both $C' \wedge F'$ and $F'$ are unsatisfiable in
  subspace (\pnt{x},\pnt{z}).  Let $z_i \in Z$ be a variable assigned
  differently in \pnt{z} and \Pnt{z}. Let $l(z_i)$ be the literal of
  $z_i$ falsified by \pnt{z} (and satisfied by \Pnt{z}).  The fact
  that $N$ outputs \Pnt{z} under input \Pnt{x} means that $F$ implies
  the clause $\Cls{B}{x} \vee l(z_i)$.  So $C' \wedge F'$ is falsified
  in subspace (\pnt{x},\pnt{z}).  The fact that $N$ outputs \Pnt{z} in
  both runs means that $F'$ implies clauses $\Cls{B}{x} \vee l(z_i)
  \vee \overline{w}$ and $\Cls{B}{x} \vee l(z_i) \vee w$. (Recall that
  the variable $w$ specifies the output of the gate $g$. The variable
  $w$ is assigned 1 in the first run to satisfy $C'$ because the
  literals of all other variables of $C'$ are falsified. The variable
  $w$ is assigned 0 in the second run.) So $F'$ implies the resolvent
  of these two clauses on $w$ equal to $\Cls{B}{x} \vee l(z_i)$.
  Hence $F'$ is falsified in subspace (\pnt{x},\pnt{z}) too.

\item $\pnt{x} = \Pnt{x}$ and $\pnt{z} = \Pnt{z}$. Let us show that in
  this case $C' \wedge F'$ and $F'$ are both satisfiable in subspace
  (\pnt{x},\pnt{z}). Let \pnt{p} be the assignment to the variables of
  $N$ produced in the first run of \ti{QuickPQE}. By definition, the
  assignment to $X \cup Z$ in \pnt{p} is the same as in
  (\Pnt{x},\Pnt{z}) and hence in (\pnt{x},\pnt{z}). Besides, \pnt{p}
  satisfies $C' \wedge F'$ and hence $F'$.
\end{enumerate}

\tb{Case 3.} \ti{In the first run of QuickPQE, all \tb{input}
  variables of the gate} $g$ \ti{are assigned values \tb{falsifying}}
$C$. \ti{In the second run of QuickPQE}, $N$ \ti{outputs an
  assignment} \pnt{z^*} \ti{that is \tb{different} from the
  assignment} \Pnt{z} \ti{output in the second run of QuickPQE}. In
this case, the solution $Q(X,Z)$ consists of the clauses $\Cls{B}{x}
\vee l(z_1)$,\dots,$\Cls{B}{x} \vee l(z_p)$ where \s{z_1,\dots,z_p} is
the set of variables assigned differently in \Pnt{z} and \pnt{z^*}.
So one needs to show that for every assignment (\pnt{x},\pnt{z}) to $X
\cup Z$, formulas $C' \wedge F'$ and $Q \wedge F'$ are
equisatisfiable.  Consider the following four sub-cases. (We denote the
set of variables where \Pnt{z} and \pnt{z^*} have the same value as
$Z^*$.)
\begin{enumerate}[(a)]
\item $\pnt{x} \neq \Pnt{x}$. Then $C'$ and $Q$ are satisfied by
  \pnt{x}.  So $C' \wedge F'$ and $Q \wedge F'$ are logically
  equivalent in subspace (\pnt{x},\pnt{z}).
\item $\pnt{x} = \Pnt{x}$  and there is a variable
$z_i \in Z^*$ that is assigned in \pnt{z} differently than
in \Pnt{z}. Then both $C' \wedge F'$ and $Q \wedge F'$ are
unsatisfiable in subspace (\pnt{x},\pnt{z}). This can be shown as in
case 2b above.
\item $\pnt{x} = \Pnt{x}$ and all variables
of $Z^*$ are assigned the same value in \pnt{z} and \Pnt{z} and there
is a variable $z_i \in (Z \setminus Z^*)$ that is assigned in \pnt{z}
as in \pnt{z^*} (i.e. differently from \Pnt{z}). Let us show that in
this case both $C' \wedge F'$ and $Q \wedge F'$ are unsatisfiable in
subspace (\pnt{x},\pnt{z}). The formula $C' \wedge F'$ is falsified
because it implies the clause $\Cls{B}{x} \vee l(z_i)$ that is
falsified by (\pnt{x},\pnt{z}). The formula $Q \wedge F'$ is falsified
by (\pnt{x},\pnt{z}) because it contains the clause $\Cls{B}{x} \vee
l(z_i)$.
\item $\pnt{x} = \Pnt{x}$ and $\pnt{z} = \Pnt{z}$. Let us show that in
  this case $C' \wedge F'$ and $Q \wedge F'$ are both satisfiable in
  subspace (\pnt{x},\pnt{z}).  Let \pnt{p} be the assignment to the
  variables of $N$ produced in the first run of \ti{QuickPQE}. By
  definition, \pnt{p} agrees with assignment (\Pnt{x},\Pnt{z}) and
  hence with (\pnt{x},\pnt{z}). Besides, \pnt{p} satisfies $C' \wedge
  F'$ and hence $F'$. Since \pnt{p} also satisfies $Q$, it satisfies
  $Q \wedge F'$ as well.
\end{enumerate}
\end{proof}
%
%
\begin{proposition}
Let $C'$ be non-redundant in \prob{Y}{C' \wedge F'}. Then the formula
$Q(X,Z)$ generated by \ti{QuickPQE} is a single-test property of $N$.
\end{proposition}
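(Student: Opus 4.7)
The plan is to use the non-redundancy of $C'$ together with Proposition~\ref{prop:lin_time} to pin down which branch of \qp actually generates $Q$, and then to verify the three conditions of Definition~\ref{def:st_prop} by direct inspection of the clauses that branch outputs.

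First I would argue that the assumption ``$C'$ non-redundant in $\prob{Y}{C' \wedge F'}$'' forces \qp into what the proof of Proposition~\ref{prop:lin_time} calls Case~3, i.e.\ the branch in which the second simulation run produces an output $\pnt{z^*}$ different from $\Pnt{z}$. In Cases~1 and~2 the procedure declares $C'$ redundant and returns $Q \equiv 1$; by the correctness part of Proposition~\ref{prop:lin_time}, this would entail $\prob{Y}{C' \wedge F'} \equiv \prob{Y}{F'}$, contradicting the non-redundancy hypothesis. Hence $Q$ consists precisely of the $p \ge 1$ clauses $\Cls{B}{x} \vee l(z_1),\dots,\Cls{B}{x} \vee l(z_p)$, where each $l(z_i)$ is the literal of $z_i$ satisfied by $\Pnt{z}$ and falsified by $\pnt{z^*}$, and where $\Cls{B}{x}$ is the longest clause falsified by $\Pnt{x}$.

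Second, I would verify conditions (1)--(3) of Definition~\ref{def:st_prop} directly on these clauses. For~(1), if $\pnt{x} \neq \Pnt{x}$ then at least one literal of $\Cls{B}{x}$ is satisfied by $\pnt{x}$, which satisfies every clause of $Q$; hence $Q(\pnt{x},\pnt{z}) = 1$ for every $\pnt{z}$. For~(2), under input $\Pnt{x}$ all literals of $\Cls{B}{x}$ are falsified, but by construction each $l(z_i)$ is satisfied by $\Pnt{z}$, so $Q(\Pnt{x},\Pnt{z}) = 1$. For~(3), take $\pnt{z} := \pnt{z^*}$, which differs from $\Pnt{z}$ since $p \ge 1$; this assignment falsifies every $l(z_i)$ and, combined with $\pnt{x} = \Pnt{x}$, also falsifies $\Cls{B}{x}$, so every clause of $Q$ is falsified and $Q(\Pnt{x},\pnt{z^*}) = 0$.

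The main obstacle is the first step: ruling out Cases~1 and~2 hinges on applying Proposition~\ref{prop:lin_time} in the contrapositive, rather than on any internal bookkeeping of \qp. Once that reduction is in place, the verification of conditions (1)--(3) is essentially syntactic and follows from the specific shape of the clauses produced in Case~3.
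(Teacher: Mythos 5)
Your proof is correct and follows essentially the same route as the paper: verify conditions (1)--(3) of Definition~\ref{def:st_prop} directly from the syntactic shape of the clauses $\Cls{B}{x} \vee l(z_i)$. The only differences are that you make explicit (via the contrapositive of Proposition~\ref{prop:lin_time}) why non-redundancy forces the Case~3 branch where $Q$ is nontrivial --- a step the paper leaves implicit --- and that for condition~(3) you use $\pnt{z^*}$ as the witness rather than the single-bit flip of $\Pnt{z}$ used in the paper; both witnesses work.
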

%
%
\begin{proof}
Let $D$ be a clause of $Q(X,Z)$ i.e. $D = \Cls{B}{x} \vee l(z_i)$
where $z_i \in Z$ and \Cls{B}{x} is the longest clause falsified by
\Pnt{x}. The clause $D$ is satisfied by any assignment \pnt{x} to $X$
that is different from \Pnt{x}. Then $D$ and hence $Q$ meet the first
condition of Definition~\ref{def:st_prop}. Let \Pnt{z} denote the
output assignment produced by $N$ for \Pnt{x}. By definition of $Q$,
the literal $l(z_i)$ is satisfied by \Pnt{z}. So $D$ is satisfied by
(\Pnt{x},\Pnt{z}). Then $D$ and hence $Q$ meet the second condition of
Definition~\ref{def:st_prop}.  Finally, $D$ is falsified by the
assignment (\Pnt{x},\pnt{z^*}) where \pnt{z^*} is obtained from
\Pnt{z} by flipping the value of $z_i$.  Then $Q$ meets the third
condition of Definition~\ref{def:st_prop}, because it excludes the
output assignment \pnt{z^*} that is wrong for the input assignment
\pnt{x'}.
\end{proof}

\end{document}